\newtheorem{theorem}{Theorem}[section]
\newtheorem{proposition}[theorem]{Proposition}
\newtheorem{definition}[theorem]{Definition}
\newtheorem{notation}[theorem]{Notation}
\newenvironment{proof}{\mbox{\bf Proof.}}{\mbox{$\dashv$}\bigskip}
\begin{document}
\begin{center}
{\Large\bf A Constructive Epistemic    Logic with Public Announcement}\\
{\bf(Non-Predetermined Possibilities) }\\
{\footnotesize {[DRAFT Feb.2013]}}

 \vspace{.25in}
{\bf   Rasoul Ramezanian}\\
Department of Mathematics
Sharif University of Technology,\\
P. O. Box 11365-9415, Tehran, Iran\\
ramezanian@sharif.edu
\end{center}
\begin{abstract}
\noindent We argue that the notion of epistemic \emph{possible
worlds} in constructivism (intuitionism) is not as the same as it
is in classic view, and there are possibilities, called
non-predetermined worlds, which are ignored in (classic) Epistemic
Logic. Regarding non-predetermined possibilities, we propose a
    constructive epistemic logic  and prove soundness and
completeness theorems for it. We extend the proposed logic by
adding a public announcement operator. To declare the significance
of our work, we formulate
  the well-known Surprise Exam Paradox, $\mathbf{SEP}$, via the proposed
  constructive   epistemic logic   and then put forward a solution for the paradox.
  We clarify
  that the puzzle in the $\mathbf{SEP}$ is because of
 students'(wrong) assumption   that the day of the exam is
  necessarily predetermined.

\vspace*{1 cm}
\end{abstract}

 In (classic) epistemic logic (see~\cite{kn:dit3}), the   knowledge of an agent is
modelled through   two fundamental  notions
\begin{itemize}
\item[1-] \emph{possible worlds} (states) and \item[2-]
\emph{Indistinguishability}. \end{itemize}
 An agent knows  some fact if it is true in all \emph{possible worlds} that the agent
 cannot \emph{distinguish} them from the actual world.

 To propose a Constructive  Epistemic Logic, we first emphasize
 that the notion of \emph{possible worlds} in intuitionism (constructivism) is not
 as the same as it is in classic view. It is our main   idea
 that leads us to introduce an Epistemic logic from intuitionistic
 (constructive) point of view.

Our proposed constructive  epistemic logic is not much different
with (classic) epistemic logic except that it admits a new kind of
possible worlds called non-predetermined worlds where the facts of
the worlds have not necessarily been  determined already!

 Suppose
I announce on Facebook that I like one  and only one of the days
of the next week, and   on each day, I will write on my Facebook
at 10am and  announce that whether I like the day or not. Also,
suppose that it is Monday 10am now, and I am going to announce my
idea about the day. What is the actual world that I am present in?
I can announce  $p:$ ``I like Monday" and also I can announce
$\neg p$: ``I do not like Monday". That is, in the actual world
that I live, it is not  determined neither $p$ nor $\neg p$ yet.
Kripke models which are defined as semantics of (classic)
epistemic logic cannot model the notion of predetermination. For
each (classic) epistemic state $(M,s)$   and each atomic formula
$q$, we have either $(M,s)\models q$ or $(M,s)\models \neg q$. Let
$(N,t)$ be a (classic) epistemic state which describes my state on
Monday. Then   either $(N,t)\models p$ or $(N,t)\models \neg p$,
(and certainly not both of them). If $(N,t)\models p$, then I
cannot announce ``I do not like Monday", whereas it is up to my
free will, and what I like is not necessary predetermined. If
$(N,t)\models \neg p$, then I cannot announce ``I   like Monday",
and I am forced to write ``I like Monday"!
\begin{quote}On Monday at 10am, both $p$ and $\neg p$ are
announce-able and a suitable \emph{epistemic possible world} to
describe my state at Monday, should not satisfy neither  $p$ nor
$\neg p$.
\end{quote}
Therefore, it is not possible to describe the actual state that I
have on Monday via classic epistemic  possible worlds~\footnote{If
we want to describe the actual state classically then we need to
consider Temporal concepts and thinking of temporal epistemic
logics. However, in our work, we show that we can describe
non-predetermined cases without using temporal modals and
operators.}. We need a kind of possible worlds which at them
neither $p$ nor $\neg p$ is necessarily predetermined. We use Beth
models (a semantic class for constructive  logic,
see~\cite{kn:dalen}) to obtain this aim.

$\qquad$

\noindent  The paper is organized as follows:
\begin{itemize}
\item[] In section~\ref{IPW}, we discuss the notion of
possibility,  and argue that non-predetermination should be
considered as a new possibility. We use Beth models
(see\cite{kn:dalen}) to describe non-predetermination
possibilities, and then as semantics for our constructive
epistemic logic, we introduce a kind of Kripke models, called
Beth-Kripke models, where each possible world  is a Beth model
(instead of a valuation to atomic formula). We then provide an
axiomatization system and prove soundness and completeness
theorem.

\item[] In  section~\ref{IPAL}, we extend our proposed logic, by
adding the public announcement operator. The public announcement
operator is defined in the way that for those formulas, say
$\varphi$ which their value are not  yet  determined in the actual
world (say $(M,s)$), both $\varphi$ and $\neg\varphi$ are
announce-able (in other words, we have both $(M,s)\models \langle
\varphi\rangle \top$ and $(M,s)\models \langle \neg\varphi\rangle
\top$).

\item[]  The surprise exam paradox, $\mathbf{SEP}$
(see~\cite{kn:Conner,kn:Quine,kn:Shaw}), was formulated via
classic epistemic logic in different
ways~\cite{kn:Bink,kn:Quine,kn:Sore,kn:william}. Also the paradox
was formulated in constructive analysis~\cite{kn:RamAP}. In
section~\ref{SEP}, we formally model $\mathbf{SEP}$ in the
proposed  constructive  epistemic logic. Then, regarding
non-predetermined worlds, we put forward a solution for the
paradox.
\end{itemize}
\section{ Constructive Possible Worlds}\label{IPW}
The semantics of classic propositional logic is introduced via
valuations of atomic formulas by $True$ or $False$. Consider two
atomic propositional formulas $p$ and $q$. Classically, there are
exactly  four different possible worlds (valuations) for $p$ and
$q$ as follows:
\begin{itemize}
\item[1-] $p=True$, $q=True$,

\item[2-] $p=True$, $q=False$,

\item[3-] $p= False$, $q=True$,

\item[4-] $p=False$, $q=False$.
\end{itemize}
However, the semantics of constructive (intuitionistic)
propositional logic is not the same semantics of classic
propositional logic. The semantics of constructive propositional
logic is formally introduced by Kripke models or Beth models. In
this paper, we consider Beth models~(see~\cite{kn:dalen}, and
chapter~13 of~\cite{kn:TD}). A Beth model is a triple
$\Theta=\langle Q,\leq, F\rangle$, where $\langle Q,\leq\rangle$
is a partially ordered set with the following condition that there
exists a node $\alpha\in Q$, called root, such that for all
$\beta,\gamma\in Q$, $\alpha\leq \beta$, $\alpha\leq
\gamma$~\footnote{This condition is extra, and we consider it here
for convenience.}.

\noindent $F$ is mapping assigning atomic formulas to elements of
$Q$. More precisely, let $\mathbf{AT}$ be the set all atomic
propositional formulas. Then $F: Q\rightarrow 2^{\mathbf{AT}}$ is
a function subject to the following condition: for all
$\alpha,\beta\in Q$, if $\alpha\leq\beta$ then $F(\alpha)\subseteq
F(\beta)$.

\begin{notation} Given a Beth model $\Theta=\langle Q,\leq, F\rangle$, instead
of writing $\alpha\in Q$,  for simplicity, we write $\alpha\in
\Theta$.
\end{notation}

\noindent A path $P$ through a node $\alpha\in \Theta$ is a
maximal linearly ordered subset of $\Theta$ containing $\alpha$. A
bar $B$ for a node  $\alpha\in \Theta$ is a subset of $\Theta$
with the property that each path through $\alpha$ intersects it.

The satisfaction relation $\Vdash\subseteq \Theta\times
SENT(\mathbf{AT})$ (where $SENT(\mathbf{AT})$ is the set of all
propositional formulas over $\textbf{AT}$), is defined
inductively,
\begin{itemize}
\item[] $\alpha\Vdash p$ iff there is a bar $B$ for $\alpha$  such
that for each $\beta\in B$, $p\in F(\beta)$, (for atomic $p\in
\mathbf{AT}$).

\item[] $\alpha\Vdash A\wedge B$ iff $\alpha\Vdash A$ and
$\alpha\Vdash B$.

\item[] $\alpha\Vdash A\vee B$ iff there is a bar $B$ for
$\alpha$, such that  for each $\beta\in B$, $\beta\Vdash A$ or
$\beta\Vdash B$.

\item[] $\alpha \Vdash A\rightarrow B$ iff for each
$\beta\geq\alpha$, $\beta\Vdash A$ then $\beta\Vdash B$.

\item[] $\alpha\Vdash \neg A$ iff for each $\beta\geq\alpha$,
$\beta\not\Vdash A$.
\end{itemize}
We say two Beth models $(\Theta_1,\alpha_1)$ and
$(\Theta_2,\alpha_2)$ are equivalent whenever they satisfy the
same propositional formulas.

\begin{theorem} \label{beth1} Given a Beth model $(\Theta,\alpha)$, for any proportional formula $A$,
\begin{itemize}
\item[a.] $\alpha\Vdash A$ iff there is a bar $B$ for $\alpha$
such that for all $\beta\in B$, $\beta\Vdash A$.

\item[b.] $\alpha\not\Vdash A$ iff there is a path $P$ through
$\alpha$ such that for each $\beta\in P$, $\beta\not\Vdash A$.

\item[c.] $\alpha\leq\beta$ and $\alpha\Vdash A$ then $\beta\Vdash
A$.
\end{itemize}
\end{theorem}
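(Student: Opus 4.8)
The plan is to prove the three parts in the order (c), (a), (b), since the monotonicity statement (c) is needed inside the induction for (a), and (b) will fall out of (a) and (c) by a purely logical manipulation in the metatheory. Before the induction I would record the structural facts about the underlying tree that make paths and bars manageable: every path contains the root (a maximal chain must contain the global minimum, which is comparable to everything); if $\alpha\leq\beta$ then every path through $\beta$ is also a path through $\alpha$ (the predecessors of $\beta$ are linearly ordered and contain $\alpha$, so the part of the chain below $\beta$ already passes through $\alpha$); and if $B$ is a bar for $\alpha$ and for each $\beta\in B$ we are given a bar $B_\beta$ for $\beta$, then $\bigcup_{\beta\in B}B_\beta$ is again a bar for $\alpha$ (any path through $\alpha$ meets $B$ at some $\beta$, and being a maximal chain through $\beta$ it then meets $B_\beta$).

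For (c) I would induct on $A$. The atomic case uses the monotonicity of $F$: given a bar $B$ for $\alpha$ with $p\in F(\gamma)$ on all of $B$ and a node $\beta\geq\alpha$, the set $\{\gamma\geq\beta: p\in F(\gamma)\}$ is a bar for $\beta$, because along any path through $\beta$ the bar $B$ is met at a node $\gamma$ comparable to $\beta$; if $\gamma\geq\beta$ we keep $\gamma$, and if $\gamma\leq\beta$ then $p\in F(\gamma)\subseteq F(\beta)$ puts $\beta$ itself into the set. The $\vee$ case is identical, with ``$p\in F$'' replaced by ``forces $A_1$ or $A_2$'' and $F$-monotonicity replaced by the induction hypothesis. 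The cases $\wedge$, $\rightarrow$ and $\neg$ are immediate, the last two because their clauses quantify over all $\delta\geq\alpha$ and that set only shrinks as the base node moves up. For the forward direction of (a) there is nothing to do: $\{\alpha\}$ is a bar for $\alpha$ and $\alpha\Vdash A$ by assumption.

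The substance of (a) is the backward direction, proved by induction on $A$ assuming a bar $B$ for $\alpha$ all of whose nodes force $A$. The atomic and $\vee$ cases are exactly the union-of-bars fact above; the $\wedge$ case uses the same bar twice with the induction hypothesis. The hard part, and the step I expect to be the main obstacle, is the $\rightarrow$ case (the $\neg$ case being a notational variant): to show $\alpha\Vdash A_1\rightarrow A_2$ I must take an arbitrary $\gamma\geq\alpha$ with $\gamma\Vdash A_1$ and force $A_2$ at $\gamma$, even though $\gamma$ bears no given relation to $B$. The idea is to run along each path through $\gamma$ (equivalently through $\alpha$), meet $B$ at some $\beta$, and compare $\beta$ with $\gamma$: if $\beta\geq\gamma$ then $\gamma\Vdash A_1$ gives $\beta\Vdash A_1$ by (c), whence $\beta\Vdash A_2$ since $\beta\Vdash A_1\rightarrow A_2$; if $\beta\leq\gamma$ then $\beta\Vdash A_1\rightarrow A_2$ applied at $\gamma\geq\beta$ gives $\gamma\Vdash A_2$ directly. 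Either way the path meets $\{\delta\geq\gamma:\delta\Vdash A_2\}$, so this set is a bar for $\gamma$ witnessing, through the induction hypothesis for $A_2$, that $\gamma\Vdash A_2$. Finally, (b) follows: the right-to-left direction is trivial since $\alpha$ itself lies on every path through $\alpha$, while the left-to-right direction is the contrapositive of the statement ``if every path through $\alpha$ meets the set of nodes forcing $A$, then $\alpha\Vdash A$'', which is just (a) applied to that set (a bar by hypothesis), the passage from ``not every path meets it'' to ``some path avoids it'' being the one place where classical reasoning in the metatheory is used.
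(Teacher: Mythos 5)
The paper gives no proof of this theorem at all --- it defers to van Dalen --- so there is nothing internal to compare against; what you have written is essentially the standard argument from that source, and it is correct. The decomposition is the right one: monotonicity (c) first by induction on $A$, then the bar characterization (a) by a second induction using the union-of-bars lemma in the atomic, $\wedge$ and $\vee$ cases and the comparability trick (compare the node of $B$ met by a path through $\gamma$ with $\gamma$ itself, and use (c) in one branch and the implication clause in the other) in the $\rightarrow$ and $\neg$ cases, and finally (b) from (a) by a classical contraposition in the metatheory. I checked each inductive step and found no gap in the reasoning itself.

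Two caveats are worth recording. First, your preliminary structural facts --- in particular that every path through $\beta$ is also a path through every $\alpha\leq\beta$, which you justify by saying the predecessors of $\beta$ are linearly ordered --- presuppose that the underlying poset is a tree. The paper's definition of a Beth model asks only for a partially ordered set with a least element, and under that definition the claim is false (a maximal chain through the top of a diamond need not contain both middle nodes), so the atomic case of (c) and the $\rightarrow$ case of (a) would lose their justification. Beth models are standardly taken over trees, and the cited source does so, so this is a defect of the paper's definition rather than of your proof; but you should state the tree hypothesis explicitly, since you use it. Second, extracting the path in the left-to-right direction of (b) uses not only classical logic but also the existence of a maximal chain extending a given chain (Zorn or dependent choice); this is harmless but deserves mention alongside the classical step you already flagged.
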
\begin{proof} See~\cite{kn:dalen}. \end{proof}

Let $\Gamma$ be a set of propositional formulas. By
$\Gamma\vdash_i A$ we mean $A$ is derivable in constructive
propositional logic. By $\Gamma\models A$, we mean $A$ is
satisfied  in all Beth models which satisfy all formulas in
$\Gamma$.

\begin{theorem}\label{SCT} Soundness and Completeness Theorem: $\Gamma\vdash_i A$ iff
$\Gamma\models A$.

\end{theorem}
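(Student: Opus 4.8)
The plan is to prove the two implications separately: \emph{soundness}, $\Gamma\vdash_i A \Implies \Gamma\models A$, and \emph{completeness}, $\Gamma\models A \Implies \Gamma\vdash_i A$.

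For soundness I would fix a standard Hilbert-style axiomatization of intuitionistic propositional logic and argue by induction on the length of a derivation of $A$ from $\Gamma$. Fix an arbitrary Beth model $(\Theta,\alpha)$ forcing every formula of $\Gamma$ at its root. The base cases amount to checking that each axiom schema is forced at every node of every Beth model; these are routine verifications against the inductive forcing clauses, using in particular the persistence property of Theorem~\ref{beth1}(c). For the modus ponens step, suppose the induction hypothesis gives $\beta\Vdash B$ and $\beta\Vdash B\ra A$ at a node $\beta$. The implication clause for $\beta\Vdash B\ra A$ says that for every $\gamma\geq\beta$, $\gamma\Vdash B$ implies $\gamma\Vdash A$; instantiating $\gamma=\beta$ yields $\beta\Vdash A$. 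Hence every formula derivable from $\Gamma$ is forced wherever $\Gamma$ is, i.e. $\Gamma\models A$.

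For completeness I would argue contrapositively: assuming $\Gamma\not\vdash_i A$, I construct a Beth model $(\Theta,\alpha)$ with $\alpha\Vdash C$ for all $C\in\Gamma$ but $\alpha\not\Vdash A$. The standard device is a \emph{canonical Beth model} whose nodes are \emph{prime theories} — deductively closed, consistent sets of formulas possessing the disjunction property (whenever $C\vee D$ lies in the theory, so does $C$ or $D$) — ordered by inclusion, with $F$ assigning to each node the atomic formulas it contains. A Lindenbaum-style saturation argument, exploiting $\Gamma\not\vdash_i A$, produces a prime theory extending $\Gamma$ and omitting $A$; this theory serves as the root $\alpha$. The heart of the matter is then a \emph{truth lemma}, $\beta\Vdash C \Iff C\in\beta$, proved by induction on the structure of $C$, which applied to the root delivers exactly what is required.

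I expect the main obstacle to be the \emph{bar clauses}. Unlike Kripke semantics, where atomic and disjunctive forcing is local to a single node, Beth forcing of an atom $p$ (resp. of a disjunction $C\vee D$) quantifies over a bar for $\beta$. The inductive step of the truth lemma must therefore exhibit, for a prime theory containing $p$ (resp. $C\vee D$), an explicit bar witnessing the forcing condition, and conversely show that forcing along some bar forces membership. Reconciling the disjunction property of prime theories with the existence of such bars is the delicate point, and it is precisely here that the tree structure of the canonical model and the saturation construction must be arranged with care; the full details are carried out in~\cite{kn:dalen}.
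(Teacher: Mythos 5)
The paper offers no argument for this theorem at all --- its ``proof'' is the single citation to~\cite{kn:dalen} --- so there is nothing in the paper for your proposal to diverge from; you have simply supplied the standard outline that the citation points to. Your soundness half is fine as it stands: induction on derivations, persistence (Theorem~\ref{beth1}(c)) for the axiom cases, and instantiating the implication clause at $\gamma=\beta$ for modus ponens. (For the axioms governing $\vee$ and for atoms you will also want Theorem~\ref{beth1}(a), which lets you pass from forcing on a bar back to forcing at the node; you wave at this as ``routine,'' which is acceptable.)

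The completeness half, however, contains a real gap at exactly the point you flag and then defer. The structure you describe --- prime theories ordered by inclusion, $F$ reading off the atoms, truth lemma by induction --- is the canonical \emph{Kripke} model, and under \emph{Beth} forcing its truth lemma fails in the direction ``$\beta\Vdash C$ implies $C\in\beta$'': a bar for $\beta$ may lie entirely strictly above $\beta$, so $p$ can be forced at $\beta$ (every maximal chain through $\beta$ eventually meets a prime theory containing $p$) without $p$ belonging to $\beta$, and likewise for disjunctions. Saying that ``the tree structure \ldots must be arranged with care'' does not repair this, because the inclusion order on prime theories is not a tree and gives you no control over which nodes appear on a maximal chain. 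The two standard repairs are: (i) prove Kripke completeness first and then convert the countermodel into an equivalent Beth model (unravel it into a tree and prolong every finite branch by an infinite constant tail, so that forcing-via-bars at a node collapses back to Kripke forcing); or (ii) build the canonical Beth model directly as a spread of finite sequences of formulas, as in~\cite{kn:dalen} and chapter~13 of~\cite{kn:TD}. Either route closes the gap; as written, your sketch identifies the obstacle but leaves the theorem's harder direction unproved.
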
\begin{proof}See~\cite{kn:dalen}. \end{proof}

In the beginning of the section, we mentioned that there are four
different possible worlds for two atomic formulas $p$ and $q$ in
classic  view. In constructive propositional logic, regarding Beth
models (instead of valuations), the number of possible worlds are
more than four.

\noindent Let $\Gamma=\{p\vee q, \neg(p\wedge q)\}$. Classically,
two possible worlds are considerable for $\Gamma$
\begin{itemize}
\item[c1-] $p=True$, $q=False$, and

\item[c2-] $p=False$, $q=True$.
\end{itemize}
But there are three non-equivalent Beth models for $\Gamma$,
\begin{itemize}
\item[i1-] $(\Theta,\alpha)$, where $\Theta=\{\alpha\}$, and
$F(\alpha)=\{p\}$,

 \item[i2-] $(\Theta',\alpha')$, where
$\Theta'=\{\alpha'\}$, and $F(\alpha')=\{q\}$, and

 \item[i3-] $(\Theta'',\alpha'')$, where
$\Theta''=\{\alpha'', \beta'',\gamma''\}$,
$F(\alpha'')=\emptyset$, $F(\beta'')=\{p\}$, $F(\gamma'')=\{q\}$,
$\alpha''\leq \beta''$, and $\alpha''\leq \gamma''$.

\end{itemize}

%\begin{center}
%\scalebox{.6}{\includegraphics{pic1.jpg}}

%$(\Theta'',\alpha'')$
%\end{center}
The two cases $i1$ and $i2$ are the same classical cases $c1$ and
$c2$, but the third one, $i3$, is new. The possible world
$(\Theta'',\alpha'')$ is a situation where formulas $p\vee q,
\neg(p\wedge q)$ holds true but neither $p$  nor $q$ are
\emph{predetermined}. The possibility of non-predetermination is
regarded in Beth models, whereas classically, it is presupposed
that valuation of atomic formulas are predetermined already.

Recalling the Surprise Exam Paradox (see section~\ref{SEP}),
suppose $p$ stands for "Tomorrow, the teacher will take  the
exam", and $q$ stands for " the teacher will take  the exam the
day after tomorrow"

\begin{itemize} \item[1-] The Beth model $(\Theta,\alpha)$ represents
the possible world where it is already determined that teacher
will take the exam tomorrow.

\item[2-] The Beth model $(\Theta',\alpha')$ represents the
possible world where it is already determined that teacher will
take the exam the day after tomorrow.

\item[3-] The Beth model $(\Theta'',\alpha'')$ represents the
possible world where the teacher has not already decided whether
take the exam   tomorrow or one day later. Classically, we cannot
represent the third possibility.
\end{itemize}

One may assume the root of the Beth model $(\Theta'',\alpha'')$ as
the current state of the possible world. Other nodes with respect
to partial order relation are future nodes of the current
state~\footnote{The reader may note that to regard future, we can
also argue in terms of temporal logic. But constructive
(intuitionistic) logic, without having temporal modal operators,
in some sense, considers this case. Therefore, avoiding extra
modal operators, we propose a  constructive epistemic logic.}.
Future is not predetermined and it is the teacher who determines
it later by his \emph{free will}. It is up to free will of the
teacher to take the exam tomorrow or not, and neither the value of
$p$ nor the value of $\neg p$ is not determined already.
 Beth models
help us describe \emph{non-predetermination} as a new possibility.
The \emph{non-predetermination} is ignored in (classic) epistemic
logic.
\begin{definition} Let $\mathbf{AT}$ be a non-empty set of propositional
variables, and $\mathcal{A}$ be a set of agents. The language
$L(\mathcal{A},\mathbf{AT})$ is the smallest superset of
$\mathbf{AT}$ such that
\begin{center}
if $\varphi,\psi\in L(\mathcal{A},\mathbf{AT})$ then $\neg
\varphi,\ (\varphi\wedge\psi), (\varphi\vee \psi),
\varphi\rightarrow\psi, \ K_i\varphi\in
L(\mathcal{A},\mathbf{AT})$,
\end{center}
for $i\in \mathcal{A}$.
\end{definition}
For $i\in \mathcal{A}$, $K_i\varphi$ has to be read as `agent $i$
knows $\varphi$".
\begin{definition}\label{applic}
A Beth-Kripke model $M$ is a tuple $M=\langle S,(\sim_i)_{i\in
\mathcal{A}}\rangle$, where $S$ is a non-empty set of Beth models
over $\mathbf{AT}$ as possible worlds, (each   $s\in S$   is a
pointed Beth model $(\Theta_s, \alpha_s)$ where $\alpha_s$ is the
root of $\Theta_s$), and each $\sim_i$ is a binary accessibility
relation between worlds.
\end{definition}
Let $M=\langle S,(\sim_i)_{i\in \mathcal{A}}\rangle$ be a
Beth-Kripke model, and $\Theta$ be an arbitrary Beth model of $M$
(i.e., $\Theta=\Theta_s$ for some $s\in S$). We define the
satisfaction relation $\Vdash_M\subseteq \Theta\times
L(\mathcal{A},\mathbf{AT})$ as follows:

\begin{itemize}
\item[-] for all $\varphi\in SENT(\mathbf{AT})$, for each
$\alpha\in \Theta$, $\alpha\Vdash_M \varphi$ iff  $\alpha\Vdash
\varphi$.

\item[-] For each $\alpha\in \Theta$, for each agent $i\in
\mathcal{A}$,  $\alpha \Vdash_M K_i\varphi$ iff for all $\Omega
\in S$, if  $\Theta\sim_i \Omega$, then for all $\beta\in \Omega$,
$\beta\Vdash_M \varphi$.

\item[-]For each $\alpha\in \Theta$, $\alpha\Vdash_M \varphi\wedge
\psi$ iff $\alpha\Vdash_M \varphi$ and $\alpha\Vdash_M \psi$.

\item[-] For each $\alpha\in \Theta$, $\alpha\Vdash_M \varphi\vee
\psi$ iff there is a bar $B\subseteq \Theta$  for $\alpha$, such
that for each $\beta\in B$, $\beta\Vdash_M \varphi$ or
$\beta\Vdash_M \psi$.

\item[-] For each $\alpha\in \Theta$, $\alpha \Vdash_M
\varphi\rightarrow \psi$ iff for each $\beta\geq_\Theta\alpha$,
$\beta\Vdash_M \varphi$ then $\beta\Vdash_M \psi$.

\item[-] $\alpha\Vdash_M \neg \varphi$ iff for each
$\beta\geq_\Theta\alpha$, $\beta\not\Vdash_M \varphi$.

\end{itemize}
%\end{definition}

\begin{theorem}\label{bethk}
Let $M=\langle S,(\sim_i)_{i\in \mathcal{A}}\rangle$ be a
Beth-Kripke model, and $\Theta$ be an arbitrary Beth model of $M$.
For each $\alpha\in \Theta$, and $\varphi\in
L(\mathcal{A},\mathbf{AT})$,  we have
\begin{itemize}
\item[a.] $\alpha\Vdash_M  \varphi$ iff there is a bar $B$ for
$\alpha$ such that for all $\beta\in B$, $\beta\Vdash_M \varphi$.

\item[b.] $\alpha\not\Vdash_M \varphi$ iff there is a path $P$
through $\alpha$ such that for each $\beta\in P$,
$\beta\not\Vdash_M \varphi$.

\item[c.] $\alpha\leq\beta$ and $\alpha\Vdash_M \varphi$ then
$\beta\Vdash_M \varphi$.

\item[d.] $\alpha\Vdash_M K_i \varphi$ then for all $\beta\in
\Theta$, $\beta\Vdash_M K_i\varphi$.
\end{itemize}
\end{theorem}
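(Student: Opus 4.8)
The plan is to isolate part~(d) first, since it is immediate from the semantics and then renders the epistemic cases of (a)--(c) trivial, and afterwards to treat (a)--(c) by induction on $\varphi$.

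I would begin with (d). The clause defining $\alpha\Vdash_M K_i\varphi$ quantifies only over the worlds $\Omega$ accessible from $\Theta$ and the nodes of those $\Omega$; the evaluation node $\alpha$ never occurs in it. Hence the truth value of $K_i\varphi$ depends only on the Beth model $\Theta$ and not on the chosen node, so $\alpha\Vdash_M K_i\varphi$ implies $\beta\Vdash_M K_i\varphi$ for every $\beta\in\Theta$, which is exactly (d). I record this node-independence of $K_i$-formulas as the fact that will drive the epistemic case below.

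For (a)--(c) I would argue by induction on $\varphi$, establishing (c) first, then (a) using (c), and finally deriving (b) from (a). The base case (atomic $\varphi$) and the four propositional connectives are handled exactly as in Theorem~\ref{beth1}: the clauses defining $\Vdash_M$ on $\wedge,\vee,\rightarrow,\neg$ coincide verbatim with those for $\Vdash$, so one simply replays the standard Beth arguments (composition of bars for the atomic and $\vee$ cases, persistence for $\rightarrow$ and $\neg$) with $\Vdash_M$ in place of $\Vdash$ and the induction hypothesis in place of Theorem~\ref{beth1}. The only genuinely new case is $\varphi=K_i\psi$, and here (d) settles everything at once: monotonicity (c) holds because $\alpha\leq\beta$ keeps $\beta$ inside the same $\Theta$; the forward direction of (a) is witnessed by the bar $\{\alpha\}$, and its converse holds because any bar is a nonempty subset of $\Theta$, so if all its nodes force $K_i\psi$ then so does $\alpha$; and for (b), if $\alpha\not\Vdash_M K_i\psi$ then by (d) no node of $\Theta$ forces $K_i\psi$, so every path through $\alpha$ is a witness.

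The step I expect to be the main obstacle is the converse direction of (a) for $\varphi=A\rightarrow A'$ (and, analogously, $\varphi=\neg A$): from a bar $B$ all of whose nodes force $A\rightarrow A'$ I must deduce $\alpha\Vdash_M A\rightarrow A'$. Fixing $\gamma\geq\alpha$ with $\gamma\Vdash_M A$, I would run over the paths through $\gamma$; each meets $B$ at a node $\delta$ comparable with $\gamma$, and I split on two cases. If $\delta\leq\gamma$, then $\delta\Vdash_M A\rightarrow A'$ together with $\gamma\geq\delta$ and $\gamma\Vdash_M A$ gives $\gamma\Vdash_M A'$ directly; if $\delta\geq\gamma$, the already-proved monotonicity (c) upgrades $\gamma\Vdash_M A$ to $\delta\Vdash_M A$, whence $\delta\Vdash_M A'$. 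The witnesses so obtained form a bar for $\gamma$ every node of which forces $A'$, so the induction hypothesis (a) for $A'$ yields $\gamma\Vdash_M A'$; this is precisely why (c) must precede (a). Finally I would obtain (b) from (a) applied to the same $\varphi$: the reverse implication is trivial because $\alpha$ lies on every path through $\alpha$, while for the forward one I note that if every path through $\alpha$ contained a node forcing $\varphi$, then the least such node on each path (which exists because the branches of a Beth model are well-ordered) would constitute a bar all of whose nodes force $\varphi$, forcing $\alpha\Vdash_M\varphi$ by (a) and contradicting the hypothesis.
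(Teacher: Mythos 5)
The paper offers nothing here beyond ``It is straightforward,'' so there is no argument of its own to compare against; your induction is the standard filling-in of that claim, and it is correct. The decomposition is the right one: (d) is immediate because the clause for $K_i\varphi$ never mentions the evaluation node, and this node-independence is exactly what makes the $K_i$ case of (a)--(c) trivial, while the propositional cases replay the classical Beth-model arguments. Two small technical remarks. First, in the bar-composition steps and in your case split $\delta\leq\gamma$ versus $\delta\geq\gamma$ for $\rightarrow$, you use that every path through a node $\gamma\geq\alpha$ is a path through $\alpha$ and hence meets a bar for $\alpha$; this is guaranteed when the underlying poset is a tree (predecessors of each node linearly ordered), as in van Dalen's Beth models, but the paper's definition literally requires only a rooted poset, so you are tacitly importing the same assumption the paper already relies on in Theorem~\ref{beth1} --- worth stating explicitly. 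Second, in deriving (b) from (a) you do not need ``the least node forcing $\varphi$ on each path,'' whose existence presupposes well-ordered branches: under your contrapositive hypothesis the set of \emph{all} nodes of $\Theta$ forcing $\varphi$ is already a bar for $\alpha$, and (a) applies to it directly, so that step should be simplified to avoid an unjustified well-ordering claim.
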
\begin{proof} It is straightforward.\end{proof}

\begin{definition}\label{sat1}
Let $\varphi\in L(\mathcal{A},\mathbf{AT})$ and $M=\langle
S,(\sim_i)_{i\in \mathcal{A}}\rangle$ be a Beth-Kripke model. We
say $(M,s)$ satisfies $\varphi$, denoted by $(M,s)\models
\varphi$, whenever the root of $\Theta_s$ satisfies the formula
$\varphi$ regarding the model $M$, i.e., $\alpha_s\Vdash_M
\varphi$.

\end{definition}

\begin{theorem}
Let $M=\langle S,(\sim_i)_{i\in \mathcal{A}}\rangle$, $s\in S$,
and $i\in \mathcal{A}$ be an arbitrary agent. Also let
$\Gamma=\{\varphi\in SENT(\mathbf{AT})\mid (M,s)\models K_i\varphi
\}$. Then $\Gamma$ is a constructive (intuitionistic)
propositional closed theory. That is, for every $\psi\in
SENT(\mathbf{AT})$, $\psi\in \Gamma$, iff $\psi$ is derivable from
$\Gamma$ in constructive propositional logic, $\Gamma\vdash_i
\psi$.
\end{theorem}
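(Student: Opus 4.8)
The plan is to prove the two halves of the stated biconditional separately, noting that the left-to-right direction is immediate and that all the content sits in the closure (right-to-left) direction, which I would reduce to the \emph{soundness} half of Theorem~\ref{SCT} together with monotonicity (part (c) of Theorem~\ref{beth1}). First I would record two preliminary observations. Since every member of $\Gamma$ lies in $SENT(\mathbf{AT})$, on all the formulas in play the Beth--Kripke relation $\Vdash_M$ agrees with the plain Beth relation $\Vdash$, by the first clause of the definition of $\Vdash_M$. Writing $R=\{\Omega\in S\mid \Theta_s\sim_i\Omega\}$ for the set of $\sim_i$-accessible worlds and unfolding the semantic clause for $K_i$, I obtain the working description
\[
\varphi\in\Gamma \iff \forall\,\Omega\in R,\ \forall\,\beta\in\Omega,\ \beta\Vdash\varphi .
\]

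The direction $\psi\in\Gamma\Rightarrow\Gamma\vdash_i\psi$ is then trivial: any formula belonging to a set is derivable from that set by the assumption rule. For the converse, suppose $\Gamma\vdash_i\psi$. By soundness (Theorem~\ref{SCT}) this yields $\Gamma\models\psi$, i.e.\ $\psi$ holds at the root of every Beth model whose root satisfies all of $\Gamma$. Now I would fix an arbitrary $\Omega\in R$ with root $\alpha_\Omega$. Because $\alpha_\Omega$ is itself a node of an accessible world, the working description gives $\alpha_\Omega\Vdash\gamma$ for every $\gamma\in\Gamma$; hence $(\Omega,\alpha_\Omega)$ is a Beth model of $\Gamma$, and so $\alpha_\Omega\Vdash\psi$. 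Finally, since $\alpha_\Omega$ is the root we have $\alpha_\Omega\leq\beta$ for every $\beta\in\Omega$, so monotonicity promotes $\alpha_\Omega\Vdash\psi$ to $\beta\Vdash\psi$ for all $\beta\in\Omega$. As $\Omega\in R$ was arbitrary, the working description delivers $\psi\in\Gamma$.

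The one step that takes genuine care is the observation that the root of each accessible world is a model of the \emph{whole} of $\Gamma$ — this is exactly what lets the soundness theorem be applied world-by-world — followed by the passage from the root to all nodes via monotonicity; everything else is bookkeeping. I would also flag the degenerate case $R=\emptyset$, where $\Gamma=SENT(\mathbf{AT})$ and both sides of the biconditional hold vacuously for every $\psi$. It is worth noting that only the soundness half of Theorem~\ref{SCT} is used here; completeness plays no role in this argument.
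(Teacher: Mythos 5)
Your proof is correct and follows essentially the same route as the paper's: apply soundness of intuitionistic propositional logic with respect to Beth semantics to conclude that $\psi$ holds throughout every $\sim_i$-accessible world, and then read off $(M,s)\models K_i\psi$ from the semantic clause for $K_i$. You are in fact slightly more careful than the paper, which omits the trivial membership-implies-derivability direction and silently passes from satisfaction at the root of each accessible world to satisfaction at all of its nodes --- the monotonicity step (Theorem~\ref{beth1}(c)) that you make explicit --- and you are right that only the soundness half of Theorem~\ref{SCT} is actually used.
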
\begin{proof} Suppose $\Gamma\vdash_i \varphi$ in constructive (intuitionistic)  logic. Then by soundness
and completeness theorem~\ref{SCT}, $\varphi$ is satisfied in all
Beth models which satisfy all formulas in $\Gamma$. Hence
$\varphi$ is satisfied in all Beth model $(\Theta_t,\alpha_t)$ in
model $M$, which $t\sim_i s$. Therefore, by definition of
satisfaction for knowledge, we have $(M,s)\models K_i\varphi$, and
we are done.
\end{proof}
Above theorem declares that
\begin{quote} knowing a fact in constructive point of view is to have a
proof for the fact.
\end{quote}

\begin{theorem} Let $M=\langle S,(\sim_i)_{i\in \mathcal{A}}\rangle$, $s\in S$,
 $i\in \mathcal{A}$, and $\varphi\in L(\mathcal{A},\mathbf{AT})$.
 Then $(M,s)\models K_i\varphi\vee \neg K_i\varphi$.
\end{theorem}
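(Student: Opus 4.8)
The plan is to unfold the meaning of $(M,s)\models K_i\varphi\vee\neg K_i\varphi$ down to the root node and then exploit the fact that, by the semantic clause for $K_i$, the truth of a knowledge formula does not depend on which node of $\Theta_s$ one evaluates it at. By Definition~\ref{sat1} the goal is exactly $\alpha_s\Vdash_M K_i\varphi\vee\neg K_i\varphi$, so by the disjunction clause it suffices to produce a bar $B$ for $\alpha_s$ such that every $\beta\in B$ satisfies $K_i\varphi$ or $\neg K_i\varphi$. I would show that the singleton $B=\{\alpha_s\}$ already works; this is a legitimate bar for $\alpha_s$ because every path through $\alpha_s$ contains $\alpha_s$, so the task reduces to checking $\alpha_s\Vdash_M K_i\varphi$ or $\alpha_s\Vdash_M\neg K_i\varphi$.

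First I would record the key structural observation: inspecting the clause $\alpha\Vdash_M K_i\varphi$ iff for all $\Omega\in S$ with $\Theta\sim_i\Omega$ and all $\beta\in\Omega$ we have $\beta\Vdash_M\varphi$, the right-hand side makes no reference to the evaluation node $\alpha$ at all. Hence whether $K_i\varphi$ holds is a property of the pair $(\Theta,\sim_i)$ alone, uniform across all nodes of $\Theta$; this is precisely the content of Theorem~\ref{bethk}(d). In particular, either every node of $\Theta_s$ forces $K_i\varphi$, or no node of $\Theta_s$ does.

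Next I would split on this node-independent condition at the meta-level. If the condition holds, then $\alpha_s\Vdash_M K_i\varphi$, so the left disjunct is satisfied at $\alpha_s$ and we are done with $B=\{\alpha_s\}$. If the condition fails, then by the uniformity just noted no node of $\Theta_s$ --- and in particular no $\beta\geq_\Theta\alpha_s$ --- forces $K_i\varphi$; by the clause for negation this yields $\alpha_s\Vdash_M\neg K_i\varphi$, so the right disjunct is satisfied and again $B=\{\alpha_s\}$ completes the argument.

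I expect the only conceptual point --- and the one worth stating explicitly --- to be the observation that knowledge is evaluated globally on each Beth world rather than locally at a node. This stability of $K_i\varphi$ is exactly what licenses a classical case split inside the otherwise intuitionistic semantics, and thereby validates excluded middle for knowledge formulas even though it fails for atomic formulas such as the non-predetermined $p$ discussed earlier. No genuine computation is required; the work lies entirely in matching the semantic clauses, so I anticipate no real obstacle beyond stating the uniformity cleanly.
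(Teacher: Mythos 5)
Your proposal is correct and follows essentially the same route as the paper: both arguments reduce the claim to $\alpha_s\Vdash_M K_i\varphi\vee\neg K_i\varphi$ via Definition~\ref{sat1} and then perform a classical case split licensed by the node-uniformity of $K_i\varphi$ from Theorem~\ref{bethk}(d). Your explicit use of the singleton bar $\{\alpha_s\}$ to discharge the disjunction clause is a small touch of extra care that the paper leaves implicit, but the substance of the argument is identical.
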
\begin{proof} By definition~\ref{sat1}, $(M,s)\models K_i\varphi\vee \neg K_i\varphi$
is equivalent to $\alpha_s\Vdash_M K_i\varphi\vee \neg
K_i\varphi$. Note that, according to item d) of
theorem~\ref{bethk}, for all $\beta\in \Theta_s$, if
$\beta\Vdash_M K_i\varphi$ then for all $\delta\in \Theta_s$, we
have $\delta\Vdash_M K_i\varphi$. Therefore,  Either none of the
nodes of the Beth model $\Theta$ satisfies $K_i\varphi$, and
consequently by  definition~\ref{sat1}, we have $\alpha_s\Vdash_M
\neg K_i\varphi$. Or all nodes of the Beth model $\Theta$
satisfies $K_i\varphi$, and consequently, $\alpha_s\Vdash_M
K_i\varphi$. Thus, we have $\alpha_s\Vdash_M K_i\varphi\vee \neg
K_i\varphi$.
\end{proof}
The above theorem says, knowing is decidable. Intuitionistic
(constructive) propositional logic, $\mathbf{IPC}$
(see~\cite{kn:TD}, chapter~2), is effectively decidable. That is,
it is decidable that whether a formula is a theorem or not. As the
notion of knowledge in intuitionism is considered equal to having
evidence (proof), it is plausible that for an appropriate
constructive  epistemic logic knowing is decidable. The two above
theorems justifies why we call our proposed epistemic logic as a
constructive epistemic logic.

\subsection {Axiomatization}
In this part, we introduce an axiomatization system dented by
$\mathbf{IS6}$, and prove soundness and complexness theorem  with
respect to $S5$ Beth-Kripke models~\footnote{A Beth-Kripke model
is $S5$ whenever the relations $\sim_i$ are reflexive, transitive,
and Euclidian.}. The logic of $\mathbf{IS6}$ is much similar to
the $S5$ epistemic logic introduced in~\cite{kn:dit3},
pages~26-29.

 The constructive
epistemic logic $IS6$ consists of axioms $A1-A6$ and the
derivation rules $R1$ and $R2$ given below
\[
\begin{array}{l}\emph{R1:~}
\vdash\varphi,\ \vdash\varphi\rightarrow\psi\Rightarrow\ \vdash\psi\\
\emph{R2:~}\vdash\varphi\Rightarrow
K_i\varphi,\emph{~for~all}~i\in A
\quad\quad\quad\quad\quad\quad\quad\quad\quad\quad\quad\quad\quad\quad\quad\quad\quad\quad\quad\quad\quad\quad\quad\quad
\end{array} \]
\[
\begin{array}{l}\emph{A1:~ Axioms~of~constructive~propositional~logic}\quad\quad\quad\quad\quad
\quad\quad\quad\quad\quad\quad\quad\quad\quad\quad\quad\quad\\
\emph{A2:~} (K_i\varphi\wedge
K_i(\varphi\rightarrow\psi))\rightarrow K_i\psi\\
\emph{A3:~} K_i\varphi\rightarrow\varphi\\
\emph{A4:~} K_i\varphi\rightarrow K_iK_i\varphi\\
\emph{A5:~} \neg K_i\varphi\rightarrow K_i\neg K_i\varphi\\
\emph{A6:~}\neg K_i\varphi\vee K_i\varphi
\end{array} \]

\begin{theorem} (Soundness and Completeness). Axiom system $IS6$
is sound and complete with respect to semantic class of $S5$
Beth-Kripke models.
\end{theorem}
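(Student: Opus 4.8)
The plan is to prove soundness and completeness separately, following the standard modal-logic pattern (validity of axioms plus a canonical-model construction) but adapting each step to the Beth-Kripke semantics where each world is a pointed Beth model rather than a valuation.

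For \textbf{soundness}, I would argue that every axiom is valid in all $S5$ Beth-Kripke models and that the rules preserve validity. Axiom $A1$ is immediate, since for purely propositional formulas the relation $\Vdash_M$ coincides with the Beth forcing $\Vdash$ (first clause of the satisfaction definition), so Theorem~\ref{SCT} supplies the constructive propositional tautologies. Axioms $A2$ (distribution), $A3$ (reflexivity), $A4$ (positive introspection), and $A5$ (negative introspection) would be checked directly from the $K_i$ clause using the $S5$ frame conditions; the key auxiliary fact is item~d of Theorem~\ref{bethk}, which lets me treat $K_i\varphi$ as uniform across an entire Beth model, so the epistemic reasoning reduces to ordinary $S5$ reasoning at the level of the accessibility relation $\sim_i$ between worlds. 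Axiom $A6$, the decidability of knowledge $\neg K_i\varphi\vee K_i\varphi$, is exactly the content of the theorem already proved above (the ``knowing is decidable'' theorem), whose proof again rests on item~d. Rule $R1$ preserves validity by the $\rightarrow$ clause together with monotonicity (item~c of Theorem~\ref{bethk}), and $R2$ (necessitation) holds because a valid formula is forced at every node of every Beth model in every model, hence at every $\sim_i$-accessible world.

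For \textbf{completeness}, I would build a canonical Beth-Kripke model out of maximal $\mathbf{IS6}$-consistent, $i$-saturated sets of formulas, in the style of the Henkin construction for intuitionistic and for $S5$ logics. The worlds of the canonical model are saturated theories; the Beth structure internal to each world is obtained by ordering the saturated extensions under inclusion (so the partial order $\leq$ models the intuitionistic $\rightarrow$ and $\neg$ clauses, giving the required bar/path behaviour), while the epistemic relation $\sim_i$ relates two worlds whenever they agree on all $K_i$-formulas (equivalently, $\{\varphi\mid K_i\varphi\in\Gamma\}\subseteq\Delta$). The $S5$ axioms $A3$--$A5$ force $\sim_i$ to be reflexive, transitive, and Euclidean, so the canonical frame lies in the intended semantic class. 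The heart is a Truth Lemma stating that a node forces $\varphi$ iff $\varphi$ belongs to the corresponding saturated set, proved by induction on $\varphi$; the modal case uses the existence lemma for consistent $K_i$-free extensions, which is where axiom $A6$ (and the ``knowing is decidable'' property it encodes) is essential, since it guarantees that $K_i\varphi$ and its negation behave classically at the epistemic level even though the propositional base is intuitionistic.

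The \textbf{main obstacle} I anticipate is the delicate interaction between the two layers of the semantics: within a single world the logic is intuitionistic (witnessed by bars and paths), whereas across worlds the knowledge operator behaves classically ($A6$ makes $K_i$ decidable). In the canonical model one must therefore simultaneously maintain intuitionistic saturation \emph{inside} each Beth structure and classical $S5$ behaviour of the $\sim_i$ relation \emph{between} structures, and verify that these are compatible — in particular that the Beth clauses for $\vee$ and for atomic $p$ (which quantify over bars) are correctly captured by the inclusion ordering on saturated sets, while the $K_i$ clause (which quantifies over all accessible worlds and all their nodes) is captured by the agreement relation. Reconciling the bar-based forcing of disjunction with the node-uniform behaviour of $K_i$ from item~d of Theorem~\ref{bethk}, and making the existence lemma deliver a saturated set that is simultaneously intuitionistically well-behaved and $\sim_i$-accessible, is the step that requires the most care.
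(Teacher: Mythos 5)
The paper itself gives no proof of this theorem --- the proof environment following the statement is empty in this draft --- so there is nothing to compare your outline against; it has to stand on its own. The soundness half of your plan is essentially fine: the verification of $A1$ reduces to Theorem~\ref{SCT} via the first clause of the $\Vdash_M$ definition, $A2$--$A5$ follow from the $S5$ frame conditions once item~d of Theorem~\ref{bethk} (together with item~c, which makes ``$\varphi$ holds at all nodes of $\Omega$'' equivalent to ``$\varphi$ holds at the root of $\Omega$'') flattens the epistemic layer, and $A6$ is literally the ``knowing is decidable'' theorem proved earlier. That part is routine and I would accept it as a sketch.

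The completeness half, however, has a genuine gap that you name but do not close. Ordering saturated $\mathbf{IS6}$-consistent theories by inclusion yields a canonical \emph{Kripke} model for the intuitionistic base, not a \emph{Beth} model: the semantics of this paper forces atoms and disjunctions via \emph{bars} (a disjunction can hold at a node where neither disjunct holds, provided every path eventually decides it), and the truth lemma ``$\alpha\Vdash\varphi$ iff $\varphi\in\Gamma_\alpha$'' for the inclusion ordering is the Kripke-style lemma, which does not directly establish membership in the semantic class of Definition~\ref{applic}. To repair this you must either (i) build the canonical object as a Kripke model and then invoke the standard transformation of intuitionistic Kripke models into equivalent Beth models (as in Troelstra--van Dalen, ch.~13, or van Dalen's paper cited as Theorem~\ref{beth1}'s source), and then check that this transformation commutes with the $K_i$ clause --- which quantifies over \emph{all} nodes of accessible worlds, so one must verify that node-by-node equivalence of the transformed structures preserves satisfaction of $K_i\varphi$ --- or (ii) construct a canonical Beth model directly, with an explicit bar-based truth lemma for atoms and $\vee$. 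Until one of these is carried out, the argument does not show that the countermodel extracted from a non-derivable formula lives in the semantic class the theorem quantifies over, and the completeness claim is unproved. A secondary loose end: you should state and prove the existence lemma (for each $\Gamma$ with $\neg K_i\varphi$, er, with $K_i\varphi\notin\Gamma$, there is an accessible saturated $\Delta$ whose \emph{every} node refutes $\varphi$ --- note ``every node,'' because of the universal quantification over $\beta\in\Omega$ in the $K_i$ clause), since that is where $A6$ and item~d actually get used, and it is not automatic from the classical $S5$ existence lemma.
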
\begin{proof} \end{proof}

\section{A Constructive Epistemic Public Announcement
Logic}\label{IPAL}
 We extend the proposed
 logic by adding public announcement operator to construct a
 constructive epistemic public  announcement logic,
 $\mathbf{CEPAL}$ similar  to the
(classic) public announcement  logic~\cite{kn:dit3}, page~73,

 \subsection{Syntax}
Given a finite set of agents $\mathcal{A}$, and a set of atomic
formulas $\mathbf{AT}$,  the language of
$L_{IEPAL}(\mathcal{A},\mathbf{AT})$ is inductively defined by the
BNF:

\begin{center}
$\varphi::= p \mid \neg\varphi \mid \varphi\wedge \psi,
\varphi\vee \psi \mid \varphi\rightarrow\psi \mid K_i\varphi \mid
[\varphi]\psi$
\end{center} where $i\in \mathcal{A}$, and $p\in \mathbf{AT}$.

 \subsection{Semantics}
 We    describe  public announcement operator on Beth-Kripke models.
 Let $M=\langle S,(\sim_i)_{i\in \mathcal{A}}\rangle$ be a Beth-Kripke
 model. Let $(\Theta_s,\alpha_s)$ be a Beth model of the model
 $M$. For a formula $\varphi$, we define the Beth model
 $\Theta_s|_\varphi$ as follows:

 \begin{itemize}
\item $\Theta_s|_\varphi=\{\alpha\in \Theta_s \mid
\alpha\not\Vdash_M \neg\varphi\}$,

\item the partial order relation of the Beth model
$\Theta_s|_\varphi$ is obtain by restriction of partial order
relation of the model $\Theta_s$ to the set of nodes in
$\Theta_s|_\varphi$.
 \end{itemize}
We let $M|_\varphi=\langle S',(\sim'_i)_{i\in \mathcal{A}}\rangle$
with \begin{itemize} \item[]
$S'=\{(\Theta_s|_\varphi,\alpha_s)\mid s\in S~ \&~
(M,s)\not\models \neg \varphi\}$, and \item[] $\sim'_i=\sim_i\cap
(S'\times S')$.
\end{itemize}

For each $\beta\in \Theta_s$, we define $\beta\Vdash_M
[\varphi]\psi$, if and only if for $\alpha_s$ (the root of
$\Theta_s$) we have $\alpha_s\not\Vdash_M \neg\varphi$ and
$(\Theta_s|_\varphi,\alpha_s)\Vdash_{M|_\varphi} \psi$.
 We then define
 \begin{center}
$(M,s)\models [\varphi]\psi$ iff $(M,s)\not\models \neg \varphi$
implies $(M|_\varphi,s)\models \psi$
 \end{center} Where

The dual of $[\varphi]$ is $\langle\varphi\rangle$:
 \begin{center}
$(M,s)\models \langle\varphi\rangle\psi$ iff $(M,s)\not\models
\neg \varphi$ and $(M|_\varphi,s)\models \psi$
 \end{center}

\begin{proposition}
Let $M=\langle S,(\sim_i)_{i\in \mathcal{A}}\rangle$ be a
Beth-Kripke, and $(\Theta,\alpha)$ be a Beth model of the model
$M$. For each formula $\varphi$, for each $\beta\in
\Theta|_{\varphi}$, we have $\beta\Vdash|_(M|\varphi) \varphi$.
\end{proposition}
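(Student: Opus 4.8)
The plan is to avoid a naive induction on the structure of $\varphi$, which breaks immediately: the restriction $M|_\varphi$ is defined from $\neg\varphi$ and is unrelated to the restrictions $M|_\psi, M|_\chi$ one would need for the subformulas of a compound $\varphi$. Instead I argue directly with the bar characterisation of forcing. Write $V=\{\delta\in\Theta\mid\delta\Vdash_M\varphi\}$. By persistence (Theorem~\ref{bethk}.c) $V$ is upward closed, and unwinding the clause for $\neg$ gives $\Theta|_\varphi=\{\gamma\mid\gamma\not\Vdash_M\neg\varphi\}=\{\gamma\mid\exists\delta\geq\gamma,\ \delta\Vdash_M\varphi\}$; that is, $\Theta|_\varphi$ is exactly the downward closure $\mathord\downarrow V$. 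Fix $\beta\in\Theta|_\varphi$. By Theorem~\ref{bethk}.a applied inside $M|_\varphi$ it suffices to exhibit a bar $B$ for $\beta$ in $\Theta|_\varphi$ with $\gamma\Vdash_{M|_\varphi}\varphi$ for every $\gamma\in B$; I take $B=V$ and establish (i) $V$ is a bar for $\beta$ in $\Theta|_\varphi$, and (ii) $\delta\Vdash_{M|_\varphi}\varphi$ for every $\delta\in V$.

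For (i), since $V\subseteq\Theta|_\varphi$ I must show every path $P$ through $\beta$ in $\Theta|_\varphi$ meets $V$. Suppose not; then every node of $P$ lies in $(\mathord\downarrow V)\setminus V$, yet each such $\delta$ has some $\epsilon\geq\delta$ with $\epsilon\in V\subseteq\Theta|_\varphi$. If $P$ has a greatest element $\mu$, such an $\epsilon\geq\mu$ is comparable to and above every node of $P$, lies in $\Theta|_\varphi$, and (being in $V$ while $\mu\notin V$) is strictly above $\mu$; then $P\cup\{\epsilon\}$ is a strictly larger chain in $\Theta|_\varphi$, contradicting maximality of $P$. For (ii), the crucial structural observation is that for $\delta\in V$ persistence gives $\mathord\uparrow\delta\subseteq V\subseteq\Theta|_\varphi$, so the whole up-set of $\delta$ survives the restriction untouched (same nodes, order, valuation $F$, hence the same bars and paths above $\delta$). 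Since the forcing of a \emph{propositional} formula at $\delta$ is determined entirely by the submodel $\mathord\uparrow\delta$, we get $\delta\Vdash_{M|_\varphi}\varphi\iff\delta\Vdash_M\varphi$, and the right-hand side holds by $\delta\in V$. Combining (i) and (ii) with Theorem~\ref{bethk}.a yields $\beta\Vdash_{M|_\varphi}\varphi$.

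The hard part is twofold. First, step (ii) as argued covers only $\varphi$ built from atoms and the propositional connectives: if $\varphi$ contains an epistemic operator $K_i$, its value at $\delta$ depends not on $\mathord\uparrow\delta$ but \emph{sideways} on the accessible Beth models $\Omega\in S$, and these are exactly what the announcement alters (worlds are deleted, survivors truncated to $\Omega|_\varphi$). This cannot be repaired by induction on $\varphi$ for the reason noted at the outset, and even a separate transfer induction on modal depth fails, because the $K_i$ clause forces one to evaluate the immediate subformula at arbitrary nodes of $\Omega|_\varphi=\mathord\downarrow V_\Omega$, not merely at the $V_\Omega$-nodes where the clean up-set argument applies. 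This circularity is the genuine obstacle, and is where I expect to need an announcement-lemma style argument (or a restriction of the claim to propositional announcements). Second, the bar argument in (i) tacitly assumes $P$ has a greatest element; for Beth models with infinite ascending branches this must be replaced by an appeal to the bar/spread structure underlying Theorem~\ref{bethk} (bar induction, or the finitely-branching tree presentation of the Beth models of~\cite{kn:dalen}), which is the remaining technical point to discharge.
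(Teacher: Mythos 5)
The paper offers no proof here (it says only ``It is straightforward''), so the real question is whether the statement can be proved at all, and on that score your two ``remaining technical points'' are not points to discharge: they are counterexamples, and the proposition as stated (``for each formula $\varphi$'') is false. For the epistemic case, take the Moore-style announcement $\varphi\equiv p\wedge\neg K_ip$ in an $S5$ model with two singleton Beth worlds $s,t$, $F(\alpha_s)=\{p\}$, $F(\alpha_t)=\emptyset$, and $\sim_i$ total. Then $\alpha_t\Vdash_M\neg\varphi$, so $t$ is deleted, while $\alpha_s\Vdash_M\varphi$, so $s$ survives with $\Theta_s|_\varphi=\{\alpha_s\}$; but in $M|_\varphi$ the only accessible world forces $p$ everywhere, hence $\alpha_s\Vdash_{M|_\varphi}K_ip$ and $\alpha_s\not\Vdash_{M|_\varphi}\varphi$. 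This is the classical ``unsuccessful update'' phenomenon and no announcement lemma will repair it. For the infinite case even atomic $\varphi=p$ fails: let $\Theta$ be the full binary tree of finite $0$--$1$ strings ordered by prefix, with $p\in F(\sigma)$ iff $\sigma$ contains a $1$. Every node has an extension forcing $p$, so no node forces $\neg p$ and $\Theta|_p=\Theta$; yet the root does not force $p$, because any bar for the root must meet the all-zeros branch. So the root lies in $\Theta|_p$ but $\not\Vdash_{M|_p}p$. The proposition is true only for propositional $\varphi$ and for Beth models in which every maximal chain of $\Theta|_\varphi$ has a greatest element (e.g.\ finite ones), and it should be stated with those restrictions.

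Within that corrected scope your argument is complete and correct: $\Theta|_\varphi$ is the downward closure of $V=\{\delta\mid\delta\Vdash_M\varphi\}$; the top element of any maximal chain of $\Theta|_\varphi$ must lie in $V$ (your maximality argument), so $V$ is a bar for every $\beta\in\Theta|_\varphi$; each $\delta\in V$ retains its entire up-set, with the same order and valuation, under the restriction, so propositional forcing at $\delta$ is unchanged; and Theorem~\ref{bethk}.a closes the argument. That is substantially more than the paper supplies. The only change I would insist on is the framing: do not present the $K_i$ obstruction and the infinite-branch obstruction as technical debts of your particular strategy to be ``discharged'' later. State them as the counterexamples they are, and restrict the proposition accordingly, since otherwise a reader will go looking for a fix that does not exist.
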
\begin{proof} It is straightforward.\end{proof}

Announcing  a formula $\varphi$,   two kinds of updates happen on
a Beth-Kripke model
\begin{itemize}
\item[I.] Updating each possible world (Beth model) of the model:

 In a constructive  possible world (a Beth model
 $(\Theta,\alpha)$),
some matters of the world are predetermined and some others are
not. When a formula $\varphi$ is announced,   all nodes of the
Beth model $\Theta$, say $\gamma\in \Theta$, which it is
impossible that  $\varphi$ gets determined in future (i.e., for
all $\beta\geq \gamma$, $\beta\not\Vdash_M \varphi$, and thus
$\beta\Vdash_M \neg\varphi$) are removed.

\item[II.] Updating the indistinguishability relations of agents:

The set of epistemic possible worlds is restricted to those
possible worlds which it is possible that $\varphi$ gets
determined, and indistinguishability relations is the same
relation regarding remained  possible worlds.
\end{itemize}
 In (classic) public announcement logic,
 $PAL$ (see~\cite{kn:dit3},~chapter~4), for a classic possible world
 $(M,s)$, and a formula $\varphi$, only one of the two formulas
 $\varphi$ and $\neg\varphi$ is \emph{announcable} (an executable announcement). It is because that in
 a classic possible world either $\varphi$ is true  or
 $\neg\varphi$, and not both of them,  and to announce a formula,
the formula must be already true in the actual world. Whereas, in
our  Beth-Kripke models,  both $\varphi$ and
 $\neg\varphi$ could be \emph{announcable}! It is because that  in
 a Beth model it is possible that neither $\neg\varphi$  nor
 $\neg\neg\varphi$ are satisfied.

 For example, assume the Beth model $(\Theta,\alpha)$, where
$\Theta=\{\alpha, \beta,\gamma\}$, $F(\alpha)=\emptyset$,
$F(\beta)=\{p\}$, $F(\gamma)=\{q\}$, $\alpha\leq \beta$, and
$\alpha\leq \gamma$. Let $M$ be a Beth-Kripke model with just one
world $s$, where $\Theta_s=\Theta$, and $\alpha_s=\alpha$, and
$s\sim_i s$, for all $i\in \mathcal{A}$. Both formula $p$ and
$\neg p$ are announcable (executable announcements) in $(M,s)$
since $(M,s)\not\models p$ and $(M,s)\not\models \neg p$, and thus
$(M,s)\models \langle p\rangle \top \wedge \langle \neg p\rangle
\top$.

\noindent The Beth-Kripke model  $M|_p$ is a model with the
world $(\Theta|_p, \alpha)$ which has two nodes $\alpha$, $\beta$,
and $F(\beta)=\{p\}$, $F(\alpha)=\emptyset$. We have
$(M|_p,s)\models \neg q$, and thus $(M,s)\models [p]\neg q$.

\noindent The Beth-Kripke model  $M|_{\neg p}$ is a model with
world $(\Theta|_{\neg p}, \alpha)$ which has two nodes $\alpha$
 $\gamma$, and  $F(\alpha)=\emptyset$, $F(\gamma)=\{q\}$. We have $(M|_{\neg
p},s)\models q$, and thus $(M,s)\models [\neg p] q$.

In (classic) public announcement logic, $PAL$
(see~\cite{kn:dit3},~chapter~4), it is possible to translate
    each formula in the language of public
announcement logic  to an equivalent   formula in the language of
epistemic logic without public announcement (see proposition~4.22,
\cite{kn:dit3}).  In contrast, in our proposed
  public announcement logic, there are formulas which
are not equivalent to any formula without announcement operator.

\begin{proposition}
Let $\mathbf{AT}=\{p\}$. The formula $\langle p\rangle \top$ is
not equivalent to any propositional formula $\varphi \in
SENT(\mathbf{AT})$.
\end{proposition}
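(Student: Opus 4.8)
The plan is to first strip $\langle p\rangle\top$ down to its semantic core. Because the announced postcondition is $\top$, the clause for $\langle\cdot\rangle$ gives $(M,s)\models\langle p\rangle\top$ iff $(M,s)\not\models\neg p$: the update to $M|_p$ contributes nothing, since $\top$ holds vacuously and $s$ survives in $M|_p$ exactly when $(M,s)\not\models\neg p$. Unfolding the Beth clause for $\neg p$, this reads $\alpha_s\not\Vdash_M\neg p$, i.e.\ there is a node $\beta\geq\alpha_s$ with $\beta\Vdash_M p$, equivalently (as $p$ is atomic) $\beta\Vdash p$. So at the root $\langle p\rangle\top$ expresses the purely existential property $D(\alpha_s)$: \emph{some node above $\alpha_s$ forces $p$.}

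The crucial observation is that $D$ is \emph{not persistent}, whereas every propositional formula is. By Theorem~\ref{beth1}(c) (equivalently Theorem~\ref{bethk}(c)), if $\alpha\leq\beta$ and $\alpha\Vdash\varphi$ then $\beta\Vdash\varphi$ for every $\varphi\in SENT(\mathbf{AT})$. I will exploit this mismatch with a Beth model in which $D$ holds at the root but fails strictly above it. Take $\Theta$ on nodes $\{\alpha,\beta,\gamma\}$ with root $\alpha$, incomparable leaves $\beta,\gamma$, $\alpha\leq\beta$, $\alpha\leq\gamma$, and $F(\beta)=\{p\}$, $F(\alpha)=F(\gamma)=\emptyset$. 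Then $\beta\Vdash p$ (the singleton $\{\beta\}$ is a bar for $\beta$), so $D(\alpha)$ holds and hence $\alpha\Vdash\langle p\rangle\top$; but no node $\geq\gamma$ forces $p$, so $D$ fails at $\gamma$.

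Now suppose, toward a contradiction, that some $\varphi\in SENT(\{p\})$ is equivalent to $\langle p\rangle\top$. Since propositional satisfaction is computed at the root and is independent of the Kripke structure (Definition~\ref{sat1} together with the first satisfaction clause $\alpha\Vdash_M\varphi\Leftrightarrow\alpha\Vdash\varphi$), equivalence forces $\alpha'\Vdash\varphi\Leftrightarrow D(\alpha')$ for the root $\alpha'$ of \emph{every} Beth model. Applied to $\Theta$ at its root, $\alpha\Vdash\varphi$; by persistence $\gamma\Vdash_\Theta\varphi$. On the other hand, consider the single-node model $\Theta'=\{\gamma'\}$ with $F(\gamma')=\emptyset$: here $D$ fails at the root, so $\gamma'\not\Vdash_{\Theta'}\varphi$. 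The subframe of $\Theta$ generated above $\gamma$ is precisely this single node with the same (empty) valuation, so locality of forcing gives $\gamma\Vdash_\Theta\varphi\Leftrightarrow\gamma'\Vdash_{\Theta'}\varphi$. This contradicts $\gamma\Vdash_\Theta\varphi$ and $\gamma'\not\Vdash_{\Theta'}\varphi$, completing the argument.

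The step I expect to require the most care is the last one: making precise that the truth value of a propositional formula at a node $\delta$ depends only on the subframe of $\Theta$ generated by $\delta$, so that a non-root node of one model may be compared with the root of another. This is the standard generated-submodel (locality) property of Beth semantics, provable by induction on $\varphi$ using that each clause---atomic-via-bars, $\wedge$, $\vee$-via-bars, $\rightarrow$ and $\neg$ (both via $\beta\geq\delta$)---only quantifies over nodes $\geq\delta$ and over bars/paths through $\delta$. I would isolate it as a one-line lemma and then invoke persistence (Theorem~\ref{beth1}(c)) exactly as above. Everything else---the reduction of $\langle p\rangle\top$ to $D$, the verification that $\beta\Vdash p$, and that $p$ is never forced above $\gamma$---is routine.
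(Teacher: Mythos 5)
Your proof is correct and follows essentially the same route as the paper: the same three-node witness model $\{\alpha,\beta,\gamma\}$ with $F(\beta)=\{p\}$ and $F(\alpha)=F(\gamma)=\emptyset$, the observation that $\alpha\Vdash\langle p\rangle\top$, and the persistence property (Theorem~\ref{beth1}(c)) to push $\varphi$ up to $\gamma$. The only divergence is in how the contradiction is extracted at $\gamma$. The paper asserts that since $F(\gamma)=\emptyset$ and $\gamma\Vdash\varphi$, the formula $\varphi$ must be a tautology, and then derives that $\langle p\rangle\top$ would hold in all Beth models; as literally stated that inference is too strong (for instance $\neg p$ is forced at a terminal node with empty valuation without being a tautology). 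You instead compare $\gamma$ with the root of the one-node model with empty valuation via a generated-submodel (locality) lemma, note that $\langle p\rangle\top$ fails there, and contradict persistence directly. This is the more careful finish, and your flagged caveat is the right one: with the paper's definition of paths and bars (maximal chains containing the node, so bars may dip below it), the locality lemma does need the small induction you describe, using persistence to discard bar elements lying below $\gamma$. So the two proofs share their skeleton, but your concluding step is tighter than the paper's.
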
\begin{proof}
Assume  $\langle p\rangle \top$ is   equivalent to a propositional
formula $\varphi \in SENT(\mathbf{AT})$. Then for every pointed
Beth model $(\Theta,\alpha)$, $(\Theta,\alpha)\Vdash \langle
p\rangle \top\leftrightarrow \varphi$. Consider the following Beth
model $(\Theta,\alpha)$, where $\Theta=\{\alpha, \beta,\gamma\}$,
$F(\alpha)=\emptyset$, $F(\beta)=\{p\}$, $F(\gamma)=\emptyset$,
$\alpha\leq \beta$, and $\alpha\leq \gamma$. We have
$\alpha\Vdash\langle p\rangle \top$, by our assumption, we have
$\alpha\Vdash \varphi$. By item $c)$ of theorem~\ref{beth1},
$\gamma\Vdash \varphi$. Since $F(\gamma)=\emptyset$, we have
$\varphi$ is a tautology, i.e., $\varphi\leftrightarrow\top$.
Hence, we have $\langle p\rangle \top$ is a tautology, i.e.,
$\langle p\rangle \top$ is satisfied in all Beth models.
Contradiction.
\end{proof}

\begin{theorem} Regarding finite $S5$ Beth-Kripke models,   the
followings is valid, for every $\varphi, \psi\in
L_{IEPAL}(\mathcal{A},\mathbf{AT})$, and $p\in \mathbf{AT}$.

\begin{center}$[\varphi]\psi \leftrightarrow (\varphi\rightarrow
\psi)$
\end{center}
%\[
%\begin{array}{l}[\varphi]\psi \leftrightarrow (\varphi\rightarrow \psi) \quad\quad\quad\quad\quad
%\quad\quad\quad\quad\quad\quad\quad\quad\quad\quad\quad\quad\\
% [\varphi]\neg \psi \leftrightarrow \\
%\end{array} \]

\end{theorem}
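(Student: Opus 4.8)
The plan is to read the claimed equivalence at the designated point, i.e. to show that for every finite $S5$ Beth-Kripke model $M$ and world $s\in S$ one has $(M,s)\models[\varphi]\psi$ iff $(M,s)\models\varphi\rightarrow\psi$ in the sense of Definition~\ref{sat1}, and to organise the argument around the precondition of the announcement. First I would dispose of the degenerate case: if $(M,s)\models\neg\varphi$ then the announcement of $\varphi$ is not executable, so the defining clause ``$(M,s)\not\models\neg\varphi$ implies $(M|_\varphi,s)\models\psi$'' makes $(M,s)\models[\varphi]\psi$ hold vacuously; on the other side $\alpha_s\Vdash_M\neg\varphi$ means no $\beta\geq_\Theta\alpha_s$ forces $\varphi$, so $\alpha_s\Vdash_M\varphi\rightarrow\psi$ holds vacuously too, and the two sides agree. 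This reduces the theorem to the executable case $(M,s)\not\models\neg\varphi$, where $(M,s)\models[\varphi]\psi$ collapses to $\alpha_s\Vdash_{M|_\varphi}\psi$ and must be compared with $\alpha_s\Vdash_M\varphi\rightarrow\psi$.

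The heart of the proof is then the relativisation lemma: for every node $\alpha\in\Theta_s|_\varphi$ one has $\alpha\Vdash_{M|_\varphi}\psi$ iff $\alpha\Vdash_M(\varphi\rightarrow\psi)$, which I would prove by induction on the structure of $\psi$. The key structural facts are that $\{\gamma:\gamma\Vdash_M\neg\varphi\}$ is upward closed by persistence (Theorem~\ref{bethk}c), so its complement $\Theta_s|_\varphi$ is downward closed and is a genuine sub-Beth-model with the same root, and that $F$ is left unchanged by the restriction. The base case $\psi=p$ atomic is the substance. For the direction $\Leftarrow$ I would use finiteness: each maximal node $m$ of $\Theta_s|_\varphi$ must itself force $\varphi$ (otherwise $m\Vdash_M\neg\varphi$ and $m$ would have been deleted, since any surviving $\varphi$-node above $m$ would contradict maximality), whence $\alpha\Vdash_M\varphi\rightarrow p$ forces $p$ at every such $m$, and these nodes furnish a bar for $p$ in $\Theta_s|_\varphi$. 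For $\Rightarrow$, given a bar $B$ for $p$ in $\Theta_s|_\varphi$ and a $\gamma\geq_\Theta\alpha$ with $\gamma\Vdash_M\varphi$, persistence of $\varphi$ keeps the whole cone above $\gamma$ inside $\Theta_s|_\varphi$, so the bar nodes lying on paths through $\gamma$ (each carrying $p$, with those below $\gamma$ giving $p\in F(\gamma)$ by monotonicity of $F$) witness $\gamma\Vdash_M p$; hence $\alpha\Vdash_M\varphi\rightarrow p$. The connective cases $\wedge,\vee,\rightarrow,\neg$ follow from the induction hypothesis together with the intuitionistic validity $\varphi\rightarrow(\psi_1\wedge\psi_2)\equiv(\varphi\rightarrow\psi_1)\wedge(\varphi\rightarrow\psi_2)$ and the observation that $\Vdash_{M|_\varphi}$ and $\Vdash_M(\varphi\rightarrow\cdot)$ interpret each connective by the same Beth clause on their respective node sets, the downward closure of $\Theta_s|_\varphi$ keeping the two readings of bars and of $\geq_\Theta$ aligned.

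The main obstacle is the knowledge case, and I expect it to force a restriction on $\psi$. When $\psi=K_i\chi$, evaluating $[\varphi]K_i\chi$ passes to $M|_\varphi$, whose carrier $S'$ deletes every $\sim_i$-accessible world $t$ with $(M,t)\models\neg\varphi$; but $\alpha_s\Vdash_M\varphi\rightarrow K_i\chi$, which in the executable case reduces by the flatness of knowledge (Theorem~\ref{bethk}d) to $\alpha_s\Vdash_M K_i\chi$, still quantifies $\chi$ over the nodes of those deleted worlds. These ranges genuinely differ: an accessible world that refutes $\varphi$ yet carries a $\chi$-free node makes $\varphi\rightarrow K_i\chi$ false while leaving $[\varphi]K_i\chi$ true, so the equivalence breaks once $K_i$ occurs in $\psi$. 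I would therefore state and prove the theorem for $\psi\in SENT(\mathbf{AT})$ propositional (matching the hypothesis $p\in\mathbf{AT}$), where the knowledge clause never arises and the relativisation lemma goes through. I would also flag, separately, that because $[\varphi]\psi$ is determined by the root of $\Theta_s$ whereas $\varphi\rightarrow\psi$ is merely persistent (Theorem~\ref{beth1}c), the equivalence must be read at the designated point $(M,s)\models\cdot$ and not as the biconditional being \emph{forced} at every node of $\Theta_s$; the latter is strictly stronger and can fail at a proper successor where $\varphi\rightarrow\psi$ has already become true but the root-indexed $[\varphi]\psi$ has not.
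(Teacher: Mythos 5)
The paper's own ``proof'' of this theorem is empty---the proof environment contains nothing---so there is no argument to compare yours against; the only question is whether your blind reconstruction is sound, and in its essentials it is. Most importantly, your diagnosis of the knowledge case is correct and shows the theorem is \emph{false as stated}. Concretely: take $S=\{s,t\}$ with $\sim_i$ the total (hence $S5$) relation, and let $\Theta_s$ and $\Theta_t$ each consist of a single root node with $F(\alpha_s)=\{p\}$ and $F(\alpha_t)=\emptyset$. Then $(M,t)\models\neg p$, so $t$ is deleted from $M|_p$ and $(M,s)\models[p]K_ip$; but $\alpha_s\Vdash_M p$ while $\alpha_s\not\Vdash_M K_ip$ (the accessible world $t$ refutes $p$ at its only node), so $(M,s)\not\models p\rightarrow K_ip$. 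Hence $[p]K_ip\leftrightarrow(p\rightarrow K_ip)$ fails in a finite $S5$ Beth-Kripke model, and the statement can only be salvaged by restricting $\psi$ (e.g.\ to $SENT(\mathbf{AT})$), exactly as you propose. Your further caveat that the equivalence can only be read at the designated point, not as a forced biconditional, is also right: $\beta\Vdash_M[\varphi]\psi$ is defined by reference to the root and is therefore constant on $\Theta_s$, whereas $\varphi\rightarrow\psi$ is merely persistent and can become true at a proper successor.

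For the positive part, your relativisation lemma ($\alpha\Vdash_{M|_\varphi}\psi$ iff $\alpha\Vdash_M\varphi\rightarrow\psi$) is the right tool, and the finiteness argument---maximal nodes of $\Theta_s|_\varphi$ must force $\varphi$ and form a bar---is the key step. Two places need more care than your sketch gives them. First, the disjunction case: by the induction hypothesis, $\alpha\Vdash_{M|_\varphi}\chi_1\vee\chi_2$ unwinds to a bar of nodes forcing $(\varphi\rightarrow\chi_1)$ or $(\varphi\rightarrow\chi_2)$, and $(\varphi\rightarrow\chi_1)\vee(\varphi\rightarrow\chi_2)$ is \emph{not} intuitionistically equivalent to $\varphi\rightarrow(\chi_1\vee\chi_2)$; you need the maximal-node/finiteness argument again here, not merely an IPC validity as in the conjunction case. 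Second, your lemma is stated for arbitrary $\alpha\in\Theta_s|_\varphi$, but a maximal chain through some $\gamma\geq\alpha$ need not pass through $\alpha$ when $\alpha$ is not the root, so the phrase ``the bar nodes lying on paths through $\gamma$'' needs either a restriction of the bar to the cone above $\gamma$ or a reduction of the whole lemma to the root plus persistence. Finally, be aware that the paper gives two mutually inconsistent clauses for $[\varphi]\psi$ (a node-level \emph{conjunction}, which is really the clause for $\langle\varphi\rangle\psi$, and a model-level \emph{implication}); your treatment of the degenerate case $(M,s)\models\neg\varphi$ silently uses the implication reading, which is the only one under which that case comes out as claimed.
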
\begin{proof} \end{proof}

\section{The Surprise Exam Paradox}\label{SEP}

We formulate the well-known Surprise Exam Paradox, $\mathbf{SEP}$,
in our constructive epistemic public announcement logic. The
$\mathbf{SEP}$ is as follows:

The teacher announces to the students:
\begin{quote}
\emph{you will have one and only one exam at 10am on one day in
the last third days of the next week (Wednesday-Thursday-Friday),
but you will not know \emph{in advance} the day of exam.}

\end{quote}
The students, using a \emph{backward argument}, reason that
\emph{there can be no exam} indeed:
\begin{itemize}
\item[-] Friday is not the day of the exam. Since if it is, we
will not have received the exam by Thursday night, and as there is
an exam in the week, then at Thursday night, we will be able to
know in advance the day of exam,

\item[-] Thursday is not the day of the exam. Since if it is, we
will not have received the exam by Wednesday night, and as there
is an exam in the week, and it is not on Friday, then at Wednesday
night, we will know in advance the day of exam is on on
Thursday,...

\item[-] and Wednesday is not the day of the exam by a similar
argument,

\item[-] then none of the days  is the day of exam.
\end{itemize}

 The   paradox was investigated in terms of
(classic) epistemic notions by several
people~\cite{kn:Bink,kn:Quine,kn:Sore,kn:william}. For example,
Gerbrandy sees the puzzle in the assumption that announcements are
in general successful~\cite{kn:Gerb}, and Baltag, to solve the
paradox, lets the students to revise their trust to the teacher
once they reach the paradox~\cite{kn:Marc}. In~\cite{kn:RamAP},
the paradox is investigated in a constructive  view, considering
\emph{free will} of the teacher.  In this paper, we claim that

\begin{quote} the puzzle in $\mathbf{SEP}$ is that students (wrongly) assume the day that teacher is
going to take the exam is predetermined!
\end{quote}

Consider the following version of $\mathbf{SEP}$ which is
obviously equivalent to the standard version: The teacher
announces to the students:
\begin{quote}
\emph{I like  one and only one of the days  among Wednesday,
Thursday, and Friday. I start to announce one by one whether I
like the days or not beginning with Wednesday, then Thursday,  and
finally Friday. You will not know, \emph{in advance},  the day I
like before  I announce that I like that day.}
\end{quote}
Let $AT=\{p_1,p_2,p_3\}$ where
 \begin{itemize}
\item[] $p_1$ stands for "I like Wednesday",

\item[] $p_2$ stands for "I like Thursday", and

\item[] $p_3$ stands for "I like Friday".
 \end{itemize}
 It is up to desire of the teacher to like which day, and his
 desire could be non-predetermined. Therefore, the teacher can
 choose freely  announce  either $p_1$ or $\neg p_1$  (certainly, not both of
  them together). But, in a classic epistemic possible world either $p_1$ is true or
  $\neg p_1$, and thus just one of the formulas $p_1$ and $\neg
  p_1$ is an executable-announcement, and thus classic epistemic possible worlds are not
  suitable to formulate the paradox.

When the teacher uses the  term `like', he means that he announces
up to his free will, and students
  must assume that both  $p_1$ and $\neg   p_1$ are
  executable-announcements at the beginning,  and if the teacher
  announces $\neg p_1$ then both  $p_2$ and $\neg   p_2$ are
  executable-announcements, and finally if the teacher first
  announces $\neg p_1$ then announces $\neg p_2$, at the end he
  can announce $p_3$, or in other words, $p_3$ is an executable
  announcements. Therefore, more formally, if $(M,s)$ is a (constructive) epistemic possible world which represents
  the situation of the $\mathbf{SEP}$ then we have
  \begin{itemize}
\item[] $(M,s)\models \langle p_1\rangle\top \wedge \langle \neg
p_1\rangle \top \wedge \langle \neg p_1\rangle\langle
p_2\rangle\top \wedge \langle \neg p_1\rangle\langle \neg
p_2\rangle\top \wedge \langle \neg p_1\rangle\langle \neg
p_2\rangle \langle  p_3\rangle\top$.
  \end{itemize}

  Let us formally describe teacher's announcement in the language
  of our proposed logic.
\begin{itemize}
\item[Claims0] The teacher claims that he likes  one and only one
of the days among Wednesday, Thursday, and Friday:
$$\varphi_0:\equiv (p_1\vee p_2\vee p_3)\wedge \neg(p_1\wedge
p_2)\wedge \neg(p_1\wedge p_3)\wedge\neg (p_2\wedge p_3).$$

\end{itemize}
The teacher says: "I start to announce one by one whether I like
the days or not beginning by Wednesday, then Thursday,  and
finally Friday. You will not know, \emph{in advance},  the day I
like before  I announce that I like that day".

\noindent He means that at least \emph{one} of the three following
claims is true.

\begin{itemize}
\item[claim1)] I (the teacher) can announce that I like Wednesday,
and you (students) would not know that I like Wednesday before I
announce it: $$\varphi_1\equiv:\langle p_1\rangle \top \wedge \neg
K p_1.$$
 \item[claim2)] I can announce that I do not like
Wednesday, and after that I can announce that I like Thursday, but
you would not know that I like Thursday before I announce it
$$\varphi_2:\equiv\langle \neg p_1\rangle\langle p_2\rangle \top
\wedge \langle\neg p_1\rangle \neg K p_2.$$

\item[claim3)] I can announce that I do not like Wednesday, and
after that I can announce  that I do not like Thursday, and
finally I can announce that I like Friday, but you would not know
that I like Friday before I announce it  $$\varphi_3:\equiv\langle
\neg p_1\rangle\langle \neg p_2\rangle \langle p_3\rangle\top
\wedge \langle\neg p_1\rangle\langle\neg p_2\rangle \neg K p_3.$$
\end{itemize}
In this way the teacher's announcement is
\begin{center}
$\varphi_0\wedge (\varphi_1\vee\varphi_2\vee\varphi_3)$.
\end{center}

 The following Beth-Kripke model $M=\langle
S,(\sim_i)_{i\in \mathcal{A}}\rangle$ represents  the The
(constructive) epistemic world of $\mathbf{SEP}$, with
\begin{itemize}
\item[] $S=\{s\}$,

\item[] $\mathcal{A}=\{student\}$,

\item[] $\sim_{student}=\{(s,s)\}$,
\end{itemize}where $(\Theta_s,\alpha_s)$ is defined as follows:
$\Theta_s=\{\alpha_s,\beta,\gamma,\delta\}$,
$F(\alpha_s)=\emptyset$, $F(\beta)=\{p_1\}$, $F(\gamma)=\{p_2\}$,
$F(\delta)=\{p_3\}$, and $\alpha_s\leq \beta$, $\alpha_s\leq
\gamma$, $\alpha_s\leq \delta$.

\noindent One may easily verify that $(M,s)\models \varphi_0\wedge
(\varphi_1\vee\varphi_2\vee\varphi_3)$.

 The students   reasons
that the third claim, $\varphi_3$, is not true as the teacher
likes at least one of the days, and after he announced that he
does not like Wednesday and Thursday, the student deduce that he
must like Friday. However, students cannot apply a backward
argument in constructive epistemic logic.

The student correctly argues that the assumption of $\varphi_3$
leads to contradiction, and we also have $(M,s)\models \neg
\varphi_3$. Then they again correctly  derives from $(M,s)\models
\neg \varphi_3$ that $(M,s)\not\models p_3$.

\noindent In classic epistemic logic negation of a formula is
defined by non-satisfaction of the formula (i.e., $((N,t)\models
\neg\psi$ iff $(N,t)\not\models \psi$), thus students derive
$(M,s)\models \neg p_3$, and using $\neg p_3$, by a backward
argument, they
    derive $\neg p_2$ , and after that $\neg
 p_1$.

\noindent  However, in Beth-Kripke model, we \emph{cannot} derive
from $(M,s)\not\models
 p_3$ that $(M,s)\models \neg p_3$.  The atomic formula $p_3$ is
 not predetermined at the  node $(\Theta_s,\alpha_s)$, and we have
 $\alpha_s\not\Vdash p_3$, and $\alpha_s\not\Vdash \neg p_3$. In this way, the
 student's reasoning does not work in constructive epistemic
 logic with public announcement.

\end{document}